\titleformat*{\section}{\large\bfseries}
\titleformat*{\subsection}{\large\bfseries}
\titleformat*{\subsubsection}{\normalsize\bfseries}
\newtheorem{definition}{Definition}
\newtheorem{thm}{Theorem}
\numberwithin{table}{section}
\numberwithin{figure}{section}
\newcommand{\T}{\mathrm{T}}
\title{Summary of effect aliasing structure (SEAS): \\ new descriptive statistics for factorial and supersaturated designs}
\author{Frederick Kin Hing Phoa$^\star$\\[.5ex]
	Yi-Hua Liao$^\star$\\[.5ex]
     David C. Woods$^\dagger$\\[.5ex]
    Shyh-Kae Chou$^\star$\\[1ex]
        $^\star$Institute of Statistical Science, Academia Sinica, Taiwan\\[.5ex]
    $^\dagger$Southampton Statistical Sciences Research Institute,\\ University of Southampton, UK}
\begin{document}
\thispagestyle{empty}
\maketitle

In the assessment and selection of supersaturated designs, the aliasing structure of interaction effects is usually ignored by traditional criteria such as $E(s^2)$-optimality. We introduce the Summary of Effect Aliasing Structure (SEAS) for assessing the aliasing structure of supersaturated designs, and other non-regular fractional factorial designs, that takes account of interaction terms and provides more detail than usual summaries such as (generalized) resolution and wordlength patterns. The new summary consists of three criteria, abbreviated as MAP: (1) Maximum dependency aliasing pattern; (2) Average square aliasing pattern; and (3) Pairwise dependency ratio. These criteria provided insight when traditional criteria fail to differentiate between designs. We theoretically study the relationship between the MAP criteria and traditional quantities, and demonstrate the use of SEAS for comparing some example supersaturated designs, including designs suggested in the literature. We also propose a variant of SEAS to measure the aliasing structure for individual columns of a design, and use it to choose assignments of factors to columns for an $E(s^2)$-optimal design. \\[1ex]

\noindent \textbf{Keywords:} $E(s^2)$-optimality; generalized wordlength pattern; generalized resolution; non-regular fractional factorial design.

\newpage


\section{Introduction}
Factor screening via designed experiments and statistical modelling is an essential step in the successful analysis of many large-scale scientific, technical and industrial processes. Nonregular fractional factorial designs, including Plackett-Burman designs and other orthogonal main effects plans \citep{X2015}, have long proved popular for experimentation on systems where factor main effects are expected to dominate the response surface. When factor sparsity \citep{BM1986} is expected to hold, a supersaturated design (SSD), a nonregular fractional factorial design with fewer runs than factors first introduced in the discussion of the paper by \citet{S1959}, can prove a cost-effective and efficient choice. There are increasing examples of the successful application of SSDs, particularly in the chemical industries (for example, \citealp{DV2008}). However, results from the application of SSDs can sometimes be inconclusive due to the bias and variance inflation caused by the partial aliasing between main effects, and between main effects and interactions. Such issues also afflict the analysis of orthogonal main effects plans when interactions are non-neglible.    

There are two common strategies to improve the analysis of nonregular designs, and supersaturated designs in particular. The first approach is to attempt to overcome the necessary inadequacies of the design brought about by the restriction of the runsize through the application of more sophisticated statistical modelling methods. The most common methods are stepwise-type regressions (for example, \citealp{L1993}, \citealp{P2014}), Bayesian variable selection (for example, \citealp{CHW1997}, \citealp{BFL2002}) and penalized least squares (shrinkage) methods (for example, \citealp{LL2003}, \citealp{PPX2009}). Mostly, these methods work via the application of factor, or effect, sparsity, for example through the addition of prior information or the shrinkage of small estimated effects to zero. The effectiveness of these methods therefore depends crucially on the validity of the effect sparsity principle; see \citet{MW2010} and \citet{DWDLV2014} for assessments and reviews. 

The second approach is to improve the properties of the design being used for the screening experiment. For general nonregular designs, performance is often summarized via generalized resolution \citep{DT1999} or the generalized word length pattern (\citealp{TD1999}); see Section~\ref{sec:theory}. Still the most common criterion used to select SSDs is $E(s^2)$-optimality \citep{BC1962}, which measures the nonorthogonality between pairs of factor columns in the design. For an $n$-run design with $m$ factors, an $E(s^2)$-optimal design minimizes 
\begin{equation*}\label{eq:es2}
E(s^2) = \sum_{i<j}s^2_{ij}/\binom{m}{2}.
\end{equation*}
where $s_{ij}$ is the vector inner-product between columns $i$ and $j$ of the $n\times m$ design matrix $\mathbf{X}$. More recently, \citet{MW2010} and \citet{JM2014} have generalized this criterion to also include inner-products between the factor columns and the constant (intercept) column.

Following \citet{BC1962}, methods of constructing optimal or efficient designs under the $E(s^2)$ criterion were take up by \citet{L1993}, who used half-fractions of Hadamard matrices, and \citet{W1993}, who added partially aliasing interaction columns to orthogonal arrays. More recently, algorithmic approaches have been developed, including a nature-inspired metaheuristic method, Swarm Intelligence Based (SIB), by \citet{PCWW2016}.

Other notable criteria for the selection of SSDs include: the $D_f$-criterion \citep{W1993}, which maximizes a sum of determinants of sub-matrices of $\mathbf{X}^{\T}\mathbf{X}$ corresponding to potential subsets of active factors; the $B$-criterion \citep{DL1994} which minimizes a measure of non-orthogonality between columns of the design obtained through regressions of one factor column on the other factor columns; and Bayesian $D$-optimality \citep{JLN2008} which minimizes the determinant of the posterior variance-covariance matrix for the main effects obtained using a sparsity-inducing prior distribution.

In general, the application of these design selection criteria has ignored the impact of interactions on the performance of the design and the subsequent modelling. The criteria focus, directly or indirectly, on correlations between pairs of factor columns in the design or, equivalently, on partial aliasing between main effects. Significant and substantive interaction effects can severely bias estimated main effects, resulting in misleading conclusions being drawn from the experiment. Equally, for many experiments performed using nonregular designs, interactions can be a major focus. This is particularly the case in chemistry experiments, when interactions between chemical constituents are usually anticipated \citep{PWX2009}. Traditional criteria often fail to assess the performance of SSDs for the estimation of interaction effects.   

In this paper, we propose and demonstrate a new descriptive summary for the aliasing structure of a SSD, or a nonregular factorial design in general. We call it the Summary of Effect Aliasing Structure (SEAS). In Section~\ref{sec:seas}, we introduce the components of SEAS and provide intuitive interpretations. In Section~\ref{sec:theory}, we state some theoretic results on the SEAS that provide connections to the traditional quantities of design resolution, word length pattern and $E(s^2)$. In Section~\ref{sec:comp}, we demonstrate the application of SEAS on some example SSDs, including some designs suggested in the literature. In Section~\ref{sec:col}, we propose a variant of the SEAS, called Effect-SEAS, to describe the partial aliasing of individual factorial effects with other effects in the design, and to quantify the impact of assigning factors to different columns of the design. A discussion and summary are provided in Section~\ref{sec:disc}, and tables of SEAS assessment of the designs from Sections~\ref{sec:comp} and~\ref{sec:col} are provided in the Appendices.

\section{Summary of effect aliasing structure}\label{sec:seas}
In this section we define our new summaries of effect aliasing. We first define the aliasing index, following \citet{CLY2004}. A $n$-run design in $m$ factiors is given by an $n\times m$ matrix $\mathbf{X}$, with $j$th column $\mathbf{x}_j$ a vector with entries either $-1$ or $+1$ for all $j=1,\dots,m$. Let $S= \left \{ \mathbf{c}_1,\ldots,\mathbf{c}_k \right \}$ be a subset of $k$ columns of $\mathbf{X}$, where $\mathbf{c}_i = \mathbf{x}_j$ for some $j\in\{1,\ldots.m\}$ and $\mathbf{c}_u\ne\mathbf{c}_v$ for all $u,v=1,\ldots,k$. Then the {\it aliasing index} of these $k$ columns in $S$, denoted as $\rho_k(S)$, is defined as
\begin{eqnarray*}
\rho_k(S) = \rho_k(S;X) = \frac{1}{n}\left|\sum_{i=1}^{n}c_{i1}\times c_{i2} \times ... \times c_{ik}\right|\,,
\end{eqnarray*}
with $c_{uv}$ being the $u$th element of $\mathbf{c}_v$. It is a measure of aliasing among the columns in $S$ and satisfies $0\leq \rho_k(S) \leq 1$. The quantity $\rho_k(S)$ can be thought of as the \textit{correlation} between two vectors formed as the products of exhaustive and mutually exclusive subsets of $S$. For example, $\rho_4(S)$ is the correlation between each one factor column in $S$ and the products of the remaining three columns, and also the correlation between each product of pairs of columns. 

We gather the aliasing indices into groups based on the number of columns, namely the {\it aliasing index groups} denoted as $\alpha_k$, such that for $k=1,2,\dots,m$,
\begin{eqnarray*}
\alpha_k = \left \{\rho_k(S):\rho_k(S)\neq 0  \right \},~k = 1,2,...,m\,.
\end{eqnarray*}
That is, $\alpha_k$ is a set that consists of nonzero aliasing indices obtained from the subsets that contain $k$ columns. If $\alpha_1$ is not the empty set, the design is not balanced, and if $\alpha_2$ is not empty, at least one pair of main effects is partially aliased (as must be the case in an SSD). If $\alpha_k$ is empty for $k>2$, no main effects are biased by $(k-1)$th-order interaction effects or, equivalently, no interaction involving $q<k$ factor columns is aliased with an interaction involving $k-q$ columns.

The summary of effect aliasing structure (SEAS) summarizes the information provided by the resolution, aberration and effect correlations but unlike a criterion such as $E(s^2)$-optimality, it does not try to  condense this information to a single number. There are three components in SEAS, abbreviated as MAP: (i) the \textbf{m}aximum dependency aliasing pattern; (ii) the \textbf{a}verage square aliasing pattern; and (iii) the \textbf{p}airwise dependency ratio. In this paper we consider only balanced SSDs, but these definitions readily extend to non-balanced designs. 

\begin{definition} 
\textbf{M-pattern}: For a design $\mathbf{X}$ with $n$ runs and $m$ factors, we define the maximum dependency aliasing pattern (M-pattern) as
\begin{eqnarray*}
\textsc{E}^{M}(\mathbf{X}) = \left ( e^M_1,e^M_2,...,e^M_m \right )\,,
\end{eqnarray*}
where
\begin{eqnarray*}
e^M_k = k + \frac{\max\alpha_k}{10}\,,\qquad k=1,\ldots,m\,,
\end{eqnarray*}
and $\max\alpha_k = \max_{\alpha_k} \rho_k(S)$ is the maximum aliasing index in the set $\alpha_k$. If $\alpha_k$ is the empty set, we define $\max\alpha_k=0$.
\end{definition}

Each entry in the M-pattern represents the largest (i.e. worst-case) correlation for each level of aliased effects. For example, $10(e^M_1-1)$ gives the maximum pairwise correlation between the intercept and the individual factor columns; $10(e^M_2-2)$ gives the maximum correlation between pairs of factor columns (corresponding to aliasing between main effects). Consider two SSDs, $\mathbf{X}_1$ and $\mathbf{X}_2$. If the first $k-1$ entries of their M-patterns are equal, and $\mathbf{X}_1$ has a lower $e^M_k$ than $\mathbf{X}_2$, then design $\mathbf{X}_1$ has a better M-pattern than $X_2$. It implies that $\mathbf{X}_1$ has lower worst-case partial aliasing than design $\mathbf{X}_2$, in that $\mathbf{X}_2$ has higher maximum correlation than $\mathbf{X}_1$ for between factor (main effect) columns and products of $k-1$ columns (corresponding to interactions involving $k-1$ factors) whilst the two designs have equivalent worst-case correlations for products of $k-1$ or fewer columns. If $\mathbf{X}_1$ has the best M-pattern among all SSDs of the same size, then $\mathbf{X}_1$ has the minimum M-pattern.

\begin{definition} 
\textbf{A-pattern}: For a design $X$ with $n$ runs and $m$ factors, we define the average square aliasing pattern (A-pattern) as
\begin{eqnarray*}
\textsc{E}^{A}(\mathbf{X}) = \left ( e^A_1,e^A_2,...,e^A_m \right )\,,
\end{eqnarray*}
where
\begin{eqnarray*}
e^A_k = k + \frac{\overline{\alpha^2_k}}{10}\,,\qquad k=1,2,\dots,m\,,
\end{eqnarray*}
and $\overline{\alpha^2_k} = \sum_{\alpha_k} \rho_k(S)^2/|\alpha_k|$ represents the average values of the squared aliasing indices in the set $\alpha_k$, with $|\alpha_k|$ is the cardinality of the set $\alpha_k$. If $\alpha_k$ is the empty set, we define $\overline{\alpha^2_k}=0$.
\end{definition}

Each entry in the A-pattern represents the average-squared correlation for each level of aliased effects. For example, $10(e^A_4-4)$ gives the average correlation between pairs of products of two columns, corresponding to aliasing between two-factor interactions or, equivalently, between main effects and three-factor interactions. The comparison for two SSDs with the A-pattern is the same as the case of the $M$-pattern. If a SSD has the best A-pattern among all SSDs of the same size, then it has the minimum A-pattern.

\begin{definition} 
\textbf{P-pattern}: For a design $X$ with $n$ runs and $m$ factors, we define the pairwise dependency ratio pattern (P-pattern) as
\begin{eqnarray*}
\textsc{E}^{P}(\mathbf{X}) = \left ( e^P_1,e^P_2,...,e^P_m \right )\,,
\end{eqnarray*}
where
\begin{eqnarray*}
e^P_k = k + \frac{|\alpha_k|/\binom{m}{k}}{10}\,,\qquad k=1,2,\dots,m\,.
\end{eqnarray*}
\end{definition}

Each entry in the P-pattern gives the percentage of nonzero aliasing indices over all possible $\binom{m}{k}$ combinations of factor columns. For example, $10(e^P_4-4)$ gives the proportion of pairs of two-factor interactions, or equivalently the proportion of pairs of main effects and three-factor interactions, which are partially aliased. For two SSDs $\mathbf{X}_1$ and $\mathbf{X}_2$, if the first $k-1$ entries of their P-patterns are equivalent and $\mathbf{X}_1$ has a lower $e^M_k$ than $\mathbf{X}_2$, then $\mathbf{X}_1$ has a better P-pattern than $\mathbf{X}_2$.

For each pattern, every entry consists of a positive digit and a decimal part. The positive digit refers to the value $k$. The decimal part, ranging from $0.000$ to $0.100$, represents the percentage from $0\%$ to $100\%$ of the pattern's focus (maximum, average or proportion). We include a division by 10 in the second term of each of $e^M_k$, $e^A_k$ and $e^P_k$ to avoid mixing  of the positive digit and the decimal part when the percentage reaches $100\%$.

\section{Theoretical relationships to traditional criteria}\label{sec:theory}
In this section, we outline the relationship between SEAS and existing quantities for assessing effect aliasing in non-regular fractional factorial designs (FFDs) and SSDs, namely the generalized resolution (GR), generalized wordlength pattern (GWLP) and $E(s^2)$-citerion. We demonstrate how each of these are actually summaries of the more detailed SEAS patterns.

Following \citet{PX2009}, we define the GR of a FFD $\mathbf{X}$ with $n$ factors and $m$ runs as
\begin{equation}\label{eq:gr}
GR(\mathbf{X}) = r+1-\max_{|S|=r}\rho_r(S)\,,
\end{equation}
where $r$ is the smallest integer such that $\max_{|S|=r} \rho_r(S)>0$. A direct relationship between the GR and the SEAS M-pattern is stated in the following theorem.

\begin{thm} \label{thm:SEAStoGR}
For a design $\mathbf{X}$ with $n$ factors and $m$ runs, $GR(\mathbf{X}) = r+1-10(e^M_r-r)$, where $r=\min\{k: e^M_k-k \neq 0\}$ for $k=1,\dots,m$.
\end{thm}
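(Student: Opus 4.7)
The proof will be essentially an unpacking of definitions, so the plan is to match the two sides term-by-term rather than introduce any new machinery.

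First I would observe that, directly from the definition of the M-pattern, $e^M_k - k = (\max \alpha_k)/10$ for every $k$, with the convention that $\max\alpha_k = 0$ when $\alpha_k = \emptyset$. Since every $\rho_k(S) \ge 0$, the condition $e^M_k - k \neq 0$ is equivalent to $\max\alpha_k > 0$, which is in turn equivalent to the existence of some subset $S$ of $k$ columns with $\rho_k(S) > 0$; equivalently, $\max_{|S|=k}\rho_k(S) > 0$.

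Next I would identify the index $r = \min\{k : e^M_k - k \neq 0\}$ with the index appearing in the definition of $GR(\mathbf{X})$, namely the smallest integer $r$ such that $\max_{|S|=r}\rho_r(S) > 0$. The equivalence established in the previous step shows these two minima coincide. Then $10(e^M_r - r) = \max\alpha_r = \max_{|S|=r}\rho_r(S)$, and substituting into (\ref{eq:gr}) gives
\begin{equation*}
GR(\mathbf{X}) = r + 1 - \max_{|S|=r}\rho_r(S) = r + 1 - 10(e^M_r - r)\,,
\end{equation*}
as required.

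The only subtlety, and the closest thing to an obstacle, is checking that the quantity $\max\alpha_k$ as defined (a maximum over the set of \emph{nonzero} indices, with a separate convention of $0$ when that set is empty) really does agree with $\max_{|S|=k}\rho_k(S)$ taken over all $k$-subsets. When $\alpha_k$ is nonempty the two maxima coincide because the positive value is attained; when $\alpha_k$ is empty every $\rho_k(S)$ is zero and both conventions return $0$. Once this bookkeeping is verified, the theorem follows.
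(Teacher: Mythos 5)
Your proof is correct and follows essentially the same route as the paper's: unpack the definition of $e^M_r$, identify $\max\alpha_r$ with $\max_{|S|=r}\rho_r(S)$, and substitute into the definition of $GR(\mathbf{X})$. The extra care you take in verifying that the two characterizations of $r$ coincide and that the empty-set convention causes no mismatch is detail the paper elides but does not change the argument.
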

\begin{proof}
From the definition of the aliasing index set we have $\max_{|S|=r}\rho_r(S) = \max_r \alpha_r$, and from the definition of the M-pattern,  $e^M_r = r + \frac{\max\alpha_r}{10}$. Substituting these quantities into~\eqref{eq:gr} completes the proof.
\end{proof}

Theorem~\ref{thm:SEAStoGR} shows that the GR is just a particular one-number summary of the M-Pattern of the SEAS. For a non-supersaturated FFD, the GR indicates the minimum order of the interaction effects that bias estimators of the main effect. However, for a SSD, the integer part of the GR will always be two due to partial aliasing between main effects. However, the M-pattern provides detailed information on the worst-case aliasing between the main effects and all higher-order interactions. It can provide differentiation between two SSDs (or two FFDs) with the same GR. 

We define the GWLP of a FFD $\mathbf{X}$ with $n$ factors and $m$ runs as a vector with elements $A_k$, for $k=1,\dots,m$:
\begin{eqnarray*}
A_k(\mathbf{X}) = \sum_{|S|=k}(\rho_k(S))^2\,.
\end{eqnarray*}
Each $A_k$ is a generalization of the number of words of length $k$ in the defining relation of a regular FFD (\citealp{WH2009}, p.~217). The GWLP can be constructed from the SEAS A- and P-patterns via the following theorem.

\begin{thm} \label{thm:SEAStoGWLP}
For a design $\mathbf{X}$ with $n$ factors and $m$ run, $A_k(\mathbf{X})=100\binom{m}{k}(e^A_k-k)(e^P_k-k)$, where $A_k$ is the $k$th entry of the GWLP for $k=1,\dots,m$.
\end{thm}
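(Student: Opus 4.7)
The plan is to prove the identity by direct substitution of the definitions of $e^A_k$ and $e^P_k$ into the right-hand side and to verify that the resulting expression collapses to the definition of $A_k(\mathbf{X})$. The argument is essentially algebraic, so no deep machinery is needed; the only subtle point is reconciling the sum over $\alpha_k$ (nonzero aliasing indices) with the sum over all $k$-subsets $S$ in the definition of $A_k$.

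First I would observe, straight from the definitions, that
\begin{equation*}
e^A_k - k = \frac{\overline{\alpha_k^2}}{10},\qquad e^P_k - k = \frac{|\alpha_k|}{10\binom{m}{k}}.
\end{equation*}
Multiplying these two expressions by $100\binom{m}{k}$ and cancelling yields $\overline{\alpha_k^2}\cdot|\alpha_k|$.

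Next I would expand the average using its definition, namely $\overline{\alpha_k^2} = \sum_{\alpha_k}\rho_k(S)^2/|\alpha_k|$, so that
\begin{equation*}
\overline{\alpha_k^2}\cdot|\alpha_k| = \sum_{\rho_k(S)\in\alpha_k}\rho_k(S)^2.
\end{equation*}
Because the terms with $\rho_k(S)=0$ contribute nothing, this sum is unchanged if it is extended from $\alpha_k$ to all $k$-subsets $S$ of the $m$ columns. That extended sum is exactly $A_k(\mathbf{X})=\sum_{|S|=k}\rho_k(S)^2$, which completes the identity.

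The only place requiring care is the edge case $\alpha_k=\emptyset$, in which $|\alpha_k|=0$ and by convention $\overline{\alpha_k^2}=0$, so both $e^A_k-k$ and $e^P_k-k$ vanish and the right-hand side is $0$; on the other side, every $\rho_k(S)=0$, so $A_k(\mathbf{X})=0$ as well. No genuine obstacle arises; the result is a tidy unpacking of the definitions, illustrating that the GWLP factors cleanly into an average-magnitude contribution (the A-pattern) and a proportion-of-nonzero-terms contribution (the P-pattern).
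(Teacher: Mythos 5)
Your proposal is correct and follows essentially the same route as the paper's proof: direct substitution of the A- and P-pattern definitions, reducing the product to $\overline{\alpha_k^2}\cdot|\alpha_k| = \sum_{\alpha_k}\rho_k(S)^2 = A_k(\mathbf{X})$. Your explicit remarks that the zero terms can be dropped from the sum over all $k$-subsets and that the empty-set case is consistent are points the paper leaves implicit, but they do not change the argument.
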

\begin{proof}
From the definition of the aliasing index set we have $A_k=\sum_k \alpha_k^2$, which can be rewritten as $A_k=\bar{\alpha_k^2} \cdot |\alpha_k|$. From the definition of the P-pattern, $|\alpha_k|=10(e^P_k-k)\binom{m}{k}$, and from the definition of the A-pattern, $\bar{\alpha_k^2}=10(e^A_k-k)$. Thus, $A_k=10(e^A_k-k) \cdot 10(e^P_k-k) \cdot \binom{m}{k}=100\binom{m}{k}(e^A_k-k)(e^P_k-k)$, completing the proof.
\end{proof}

Theorem~\ref{thm:SEAStoGWLP} shows that each entry of the GWLP is proportional to the product of corresponding (shifted) entries of the SEAS A- and P-patterns. It implies that the number of words of length $k$ can be decomposed into information on the average effect aliasing and the pairwise dependency ratio. Two designs with the same GWLP may still have discrepancies in A-patterns and P-patterns, and these discrepancies may help in differentiating two designs for different purposes, as demonstrated in Section~\ref{sec:comp}.

Finally, we show the relationship between $E(s^2)$ and SEAS.

\begin{thm} \label{thm:SEAStoEs2}
For a design $\mathbf{X}$ with $n$ factors and $m$ runs, $E(s^2)=100n^2(e^A_2-2)(e^P_2-2)$.
\end{thm}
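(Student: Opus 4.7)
The plan is to derive this as essentially a corollary of Theorem~\ref{thm:SEAStoGWLP}, specialized to $k=2$, after first relating $E(s^2)$ to the second entry $A_2$ of the GWLP. The key observation is that for $S=\{\mathbf{x}_i,\mathbf{x}_j\}$ with $i\ne j$, the definition $\rho_2(S)=\frac{1}{n}|s_{ij}|$ immediately gives $\rho_2(S)^2 = s_{ij}^2/n^2$.

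First, I would sum this identity over all unordered pairs of distinct columns. Comparing with the definition of $A_k$ at $k=2$,
\begin{equation*}
A_2(\mathbf{X}) \;=\; \sum_{|S|=2}\rho_2(S)^2 \;=\; \frac{1}{n^2}\sum_{i<j} s_{ij}^2\,,
\end{equation*}
while the definition of $E(s^2)$ gives $\sum_{i<j}s_{ij}^2 = \binom{m}{2}E(s^2)$. Combining these yields the bridge identity $A_2(\mathbf{X}) = \binom{m}{2}E(s^2)/n^2$, or equivalently $E(s^2) = n^2 A_2(\mathbf{X})/\binom{m}{2}$.

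Next I would plug in Theorem~\ref{thm:SEAStoGWLP} at $k=2$, which asserts $A_2(\mathbf{X}) = 100\binom{m}{2}(e^A_2-2)(e^P_2-2)$. Substituting into the bridge identity, the $\binom{m}{2}$ factors cancel and I am left with
\begin{equation*}
E(s^2) \;=\; \frac{n^2}{\binom{m}{2}}\cdot 100\binom{m}{2}(e^A_2-2)(e^P_2-2) \;=\; 100n^2(e^A_2-2)(e^P_2-2)\,,
\end{equation*}
which is the desired conclusion.

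There is no real obstacle here beyond bookkeeping: the whole argument is a direct cascade of definitions, the only substantive step being the observation $\rho_2(S)^2 = s_{ij}^2/n^2$, which translates between the aliasing-index language used to build the SEAS patterns and the inner-product language used to define $E(s^2)$. One minor point to be careful about is that the sum $\sum_{\alpha_2}\rho_2(S)^2$ appearing implicitly through $\overline{\alpha_2^2}\cdot|\alpha_2|$ in Theorem~\ref{thm:SEAStoGWLP} equals the full sum $\sum_{|S|=2}\rho_2(S)^2$, because any $S$ with $\rho_2(S)=0$ contributes zero to the squared sum whether it is included or excluded; this is exactly the same observation used in the proof of Theorem~\ref{thm:SEAStoGWLP} and requires no new argument.
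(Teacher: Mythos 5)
Your proposal is correct and follows essentially the same route as the paper's own proof: expressing $A_2$ as $n^{-2}\sum_{i<j}s_{ij}^2$, rewriting $E(s^2)$ as $n^2A_2/\binom{m}{2}$, and then invoking Theorem~\ref{thm:SEAStoGWLP} at $k=2$. The extra remark about zero-valued $\rho_2(S)$ terms is a nice touch but, as you note, adds nothing beyond what Theorem~\ref{thm:SEAStoGWLP} already handles.
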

\begin{proof}
From the definition of $A_2$, we have $A_2=\sum_{|S=2|}\rho_2(S)^2=n^{-2}\sum_{i<j}s^2_{ij}$. Then the $E(s^2)$ value can be rewritten as $E(s^2)=n^2 A_2 / \binom{m}{2}$. From Theorem~\ref{thm:SEAStoGWLP}, we have $A_2=100\binom{m}{2}(e^A_2-2)(e^P_2-2)$, completing the proof.
\end{proof}

Expressing the $E(s^2)$ value as a function of only $e^A_2$ and $e^P_2$ reinforces the fact that this criterion only considers relationships between pairs of main effects in the design, assuming interaction effects to be negligible. In contrast, SEAS provides a comprehensive description on the relationships between all main effects and other main effects, and between all interaction effects. Hence, SEAS can discriminate between sets of good, or optimal, $E(s^2)$ designs is terms of their interaction aliasing properties. 

\section{Comparison of SSDs using the SEAS}\label{sec:comp}

\begin{table}
\caption{\label{tab:summary}Summary of design performance for the five designs discussed in Section~\ref{sec:comp}.}
\begin{center}
\begin{tabular}{c@{\hskip 3ex}cc@{\hskip 3ex}ccc@{\hskip 3ex}c}
\hline
& $E(s^2)$ & GR & GWLP &  &  SEAS & \\
Design &  &  & $(A_2,A_3)$ & $(e^M_2, e^M_4)$ & $(e^A_2,e^A_3)$ & $e^P_3$ \\
\hline
$D_1$ & 7.921	& 2.29 & (10.224, 141.713) & (2.0714, 4.0714) & (2.0040, 3.0128) & 3.0621 \\
$D_2$ & 7.921	& 2.29 & (10.224, 140.730) & (2.0714, 4.1000) & (2.0040, 3.0129) & 3.0615 \\
$D_3$ & 7.921	& 2.57 & (10.224, 141.712) & (2.0429, 4.1000) & (2.0040, 3.0136) & 3.0589 \\
$D_{Lin}$ & 7.921 & 2.57 & (10.224, 141.473) & (2.0429, 4.1000) & (2.0040, 3.0131) & 3.0612 \\
$D_{SIB}$	 & 7.415 & 2.57	& (9.571, 142.854) & (2.0429, 4.1000) & (2.0038, 3.0132) & 3.0610 \\
\hline
\end{tabular}
\end{center}
\end{table}

This section applies the SEAS measures to assess and compare five different SSDs with 14 runs and 23 factors. These designs are listed in Appendix~\ref{app:comp} and labelled $D_1$, $D_2$, $D_3$, $D_{Lin}$ and $D_{SIB}$. The first three designs are included for demonstration purposes. Design $D_{Lin}$ is from \citet{L1993}, noting that the design in that paper for 24 factors has two identical factor columns. Design $D_{SIB}$ is $E(s^2)$-optimal, found via the swarm intelligence-based algorithm of \citet{PCWW2016}. The properties of each design, $E(s^2)$, GR, GWLP and the SEAS, are given in Appendix~\ref{app:comp}, with Table~\ref{tab:summary} summarizing the quantities that will be discussed in this section.

\textbf{Comparison 1: GR and M-patterns.} Designs $D_1$ and $D_2$ share the same $GR=2.29$, implying that the worst-case aliasing among all main effect pairs is the same in both designs. The SEAS M-pattern can be used to distinguish these two designs, with the patterns first difference occurring for $e^M_4$, with $e^M_4=4.0714$ for $D_1$ and $e^M_4=4.1000$ for $D_2$. This element of the M-pattern measures aliasing between combinations of four factors, that is between a main effect and a three-factor interaction or between pairs of two-factor interactions. Design $D_1$ has no complete aliasing between these effects, whereas for design $D_2$ there is at least one pair of two-factor interactions that are fully aliased (or equivalently at least one main effect fully aliased with one three-factor interaction). Notice that although designs $D_3$, $D_{Lin}$ and $D_{SIB}$ all have lower values of $e^M_2$ than $D_1$, and therefore lower worst-case aliasing between pairs of main effects, these three designs also have $e^M_4=4.1000$ and hence have complete aliasing between at least one main effect and one three-factor interaction. 

\textbf{Comparison 2: GWLP, A-patterns and P-patterns.} Design $D_2$ is preferred to design $D_3$ under the criterion of GWLP, as $A_1$ and $A_2$ are both equal for the two designs, and $A_3$ is lower for $D_2$ than $D_3$. The A- and P-patterns can be used to further explore the differences between these designs. Under the A-pattern, $D_2$ remains a better choice ($e^A_3=3.0129$) than $D_3$ ($e^A_3=3.0136$). However, under the p-pattern, this preference is reversed, with $D_3$ ($e^P_3=3.0589$) slightly outperforming $D_2$ ($e^P_3=3.0615$). Design $D_2$ has slightly lower average partial aliasing between main effects and two-factor interactions, but design $D_3$ has a slightly lower proportion of pairs of main effects and three-factor interactions partially aliased. Such tensions are not uncommon in highly fractionated nonregular designs, and the SEAS measures allow these differences to be identified and quantified.

\textbf{Comparison 3: An Evaluation of $D_{Lin}$.} \citet{L1993} constructed a 14-run design as a half-fraction of a 28-run Plackett-Burman design. This design, and also $D_3$, has a higher GR (2.57) than either of designs $D_1$ or $D_2$ (both 2.29). For the GWLP, $D_{Lin}$ has a lower $A_3$ (141.473) than either $D_1$ (141.713) or $D_3$ (141.712) but higher than $D_2$, which has the lowest $A_3$ amongst these four designs (140.730). Design $D_{Lin}$ would appear preferable to either $D_1$ or $D_2$ for the estimation of main effects in the absence of any two-factor, or higher-order, interactions.

We can use the SEAS measures to compare these designs in more detail, and evaluate the performance of $D_{Lin}$ when some interactions are present. Firstly, the higher GR of $D_{Lin}$ leads to this design having a better M-pattern than either $D_1$ or $D_2$, with $e_2^M=2.0429$ against 2.0714. However, it should again be noted that $D_{Lin}$ has  $e_4^M=4.1000$ meaning that, as with $D_2$ and $D_3$, there full aliasing between at least one main effect and three-factor interaction. Designs $D_{Lin}$ and $D_3$ are essentially inseparable under the M-pattern, with the first difference occurring for $e^M_{21}$ (21.0571 for $D_{Lin}$ and 21.0857 for $D_3$), corresponding to differences in maximum partial aliasing between main effects and very high-order interactions.

Secondly, the four designs have the same values for $e^A_2=2.0040$ and $e^P_2=2.1000$, which is reflected in the equality of their $E(s^2)$ values. However, there are some interesting differences in the next values for both the A-and P-patterns. Although $D_{Lin}$ has a better $A_3$ entry of the GWLP than either $D_1$ or $D_3$,  these two designs are better choices in terms of the A-pattern and P-pattern, respectively. In contrast, $D_2$ has a better $A_3$ entry of the GWLP than $D_{Lin}$, but $D_{Lin}$ is a better choice under the P-pattern if a design is required with a minimum number of columns with non-zero partial aliasing. This illustrates the additional information obtained through the refinement of the GWLP into the A- and P-Patterns when choosing a good design for a specific purpose.

\textbf{Comparison 4: An Evaluation of $D_{SIB}$.} Design $D_{SIB}$ is $E(s^2)$-optimal and obtained by \citet{PCWW2016} using a swarm intelligence algorithm. It has a lower $E(s^2)$ value than the other four designs, reflected in its slightly smaller value of $e_2^A$ (2.0038 vs 2.0040 for the other designs). It also has a better GWLP, with a lower value of $A_2$ (9.571 vs 10.224). $D_{SIB}$ is a better design than the others for estimation of main effects in the absence of any two-factor, or higher-order, interactions. 

However, if interactions are considered, we again see some contradictions. $D_{SIB}$ also has $e_4^M=4.1000$, meaning there is full aliasing between some pairs of main effects and three-factor interactions (or between some pairs of two-factor interactions). The average aliasing between main effects and two-factor interactions is also higher for this design than for $D_1$, $D_2$ or $D_{Lin}$ ($e_3^A=3.0132$ vs 3.0128, 3.0129, 3.0131, respectively). It also has a higher portion of pairs of main effects and two-factor interactions partially aliased than design $D_3$ ($e_3^P=3.0610$ vs 3.0589), and the highest value of $A_3$ in the GWLP for all the designs. This illustration again shows that a design with the best GWLP cannot guarantee to be the best choice in either the average-squared correlation for each level of aliased effects (A-pattern) or the percentage of nonzero aliasing indices (P-pattern).

\section{Effect-SEAS for assessing the aliasing between individual factorial effects in a design}\label{sec:col}

Several authors have demonstrated how the aliasing properties of individual columns of a design can influence the effectiveness of, for example, variable selection through the assignment of factors to columns; see, for example, \citet{MW2010}. SEAS can also be used to describe the aliasing implied by a given design between an individual main effect and other factorial effects, and hence can inform the assignment of those factors thought more likely to be important to the columns with better properties. We call this application of SEAS \textit{Effect-SEAS}. In this section we only consider main effects but the definitions and applications can easily be extended to interaction effects. We start by defining individual alias index sets for each main effect.

Assume a design $\mathbf{X}$ with $n$ runs and $m$ factors. Let $\mathbf{x}_l$ be the column of $\mathbf{X}$ corresponding to the main effect of interest, for $1 \leq l \leq m$, and denote as $\mathbf{X}_{-l}$ the reduced design with $n$ runs and $m-1$ columns, where $\mathbf{x}_l$ is removed from $\mathbf{X}$. Let $S=\{\mathbf{c}_1,\dots,\mathbf{c}_{k-1}\}$ be a subset of $k-1$ columns of $\mathbf{X}_{-I}$, with $k\leq m-1$ and $\mathbf{c}_i=\mathbf{x}_j$ for some $j\in\{1,\ldots,m \}\setminus\{l\}$. Then the effect aliasing index of $\mathbf{x}_l$ with the $k-1$ effects in $S$, denoted as $\rho_k(\mathbf{x}_l | S)$, is defined as
\begin{eqnarray*}
\rho_k(\mathbf{x}_l | S) = \frac{1}{n}\left|\sum^n_{i=1} x_{il} \times c_{i1} \times \dots \times c_{i{k-1}} \right|\,,
\end{eqnarray*}
with $x_{uv}, c_{uv}$ being the $u$th element of $\mathbf{x}_v,\mathbf{c}_v$, respectively.
Next, we define the effect aliasing index group for $\mathbf{x}_l$ as
\begin{eqnarray*}
\alpha_k(\mathbf{x}_l)=\{\rho_k(\mathbf{x}_l | S):\rho_k(\mathbf{x}_l | S) \neq 0\}\,,\, k=2,\dots,m;\,k \neq l\,.
\end{eqnarray*}

We can now define the M-, A- and P-patterns for the main effect corresponding to $\mathbf{x}_l$ as follows.

\begin{definition}
For a design $\mathbf{X}$ with $n$ runs and $m$ factors, let $\mathbf{x}_l$ be column of $\mathbf{X}$ corresponding to the main effect of interest and $\mathbf{X}_{-l}$ be the reduced design with $\mathbf{x}_l$ removed. We define the effect specific maximum dependency aliasing pattern (M-pattern), average square aliasing pattern (A-pattern) and pairwise dependency ratio pattern (P-pattern) of $\mathbf{x}_l$ as
\begin{equation*}
E^M(\mathbf{x}_l |\mathbf{X}_{-l})=[e^M_2(\mathbf{x}_l),\dots, e^M_m(\mathbf{x}_l)]\,,
\end{equation*}
\begin{equation*}
E^A(\mathbf{x}_l | \mathbf{X}_{-l})=[e^A_2(\mathbf{x}_l), \dots, e^A_m(\mathbf{x}_l)]\,,
\end{equation*}
\begin{equation*}
E^P(\mathbf{x}_l | \mathbf{X}_{-l})=[e^P_2(\mathbf{x}_l), \dots, e^P_m(\mathbf{x}_l)]\,,
\end{equation*}
where $e^M_k(\mathbf{x}_l)=k+\frac{\max \alpha_k(\mathbf{x}_l)}{10}$, $e^A_k(\mathbf{x}_l)=k+\frac{\bar{\alpha_k}(\mathbf{x}_l)}{10}$ and $e^P_k(\mathbf{x}_l)=k+\frac{|\alpha_k(\mathbf{x}_l)|/\binom{m}{k}}{10}$, respectively, and $\max \alpha_k(\mathbf{x}_l) = \max_{\alpha_k(\mathbf{x}_l)}\rho_k(\mathbf{x}_l | S)$ and $\bar{\alpha_k}(\mathbf{x}_l) = \sum_{\alpha_k(\mathbf{x})_l}\rho_k(\mathbf{x}_l | S)/|\alpha_k(\mathbf{x})_l|$. If $\alpha_k(\mathbf{x}_l)=\emptyset$, we define $\max \alpha_k(\mathbf{x}_l) =0$.
\end{definition}

We demonstrate the use of the Effect-SEAS for individual main effects using design $D_{SIB}$ with $n=14$ runs and $m=23$ factors. The Effect-SEAS patterns for this design are given in Appendix~\ref{app:eseas}.

\textbf{Effect-SEAS 1: M-pattern analysis} (Table~\ref{tab:eseasM}). All columns have $e^M_2(\mathbf{x}_l)=2.0429$, and hence we first compare columns using $e^M_3(\mathbf{x}_l)$. Columns 1, 6, 9, 16, 18 and 23 have $e^M_3(\mathbf{x}_l)=3.0571$; the other columns have $e^M_3(\mathbf{x}_l)=3.0857$. Hence, these six columns have maximum correlation with products of any two columns which is 33\% smaller than the other columns. \citet{MW2010} showed that similar differences to this in column correlations can result in substantial differences in the power to detect active effects. If more than six active factors are anticipated, columns 4, 5, 12, and 17 should be avoided, as factors assigned to these columns would have main effects completely aliased with at least one three-factor interaction. 

\textbf{Effect-SEAS 2: A-pattern analysis} (Table~\ref{tab:eseasA}) Columns 8, 12 and 23 have the lowest $e^A_2(\mathbf{x}_l)=2.0028$, implying that the corresponding three main effects will have the least partial aliasing on average with other main effects. These columns can further be differentiated using $e^A_3(\mathbf{x}_l)$: column 12 is the best ($e^A_3(\mathbf{x}_l)=3.0135)$, followed by column 23 ($e^A_3(\mathbf{x}_l)=3.0136$) and column 8 ($e^A_3(\mathbf{x}_l)=3.0139$). A complete ranking under the A-pattern of all 23 columns in ascending order is: 12, 23, 8, 18, 21, 22, 1, 7, 11, 6, 3, 5, 13, 15, 16, 10, 17, 9, 14, 2, 20, 19, 4.

\textbf{Effect-SEAS 3: P-pattern analysis} (Table~\ref{tab:eseasP}) All columns have $e^P_2(\mathbf{x}_l)=2.1$ and hence again we compare using $e^M_3(\mathbf{x}_l)$. Column 19 has the smallest $e^P_3(\mathbf{x}_l)=3.0567$, implying that the corresponding main effect would be aliased with the smallest proportion of two-factor interactions. A complete ranking under the P-pattern of all 23 columns in ascending order is: 19, 20, 2, 14, 13, 5, 8, 4, 23, 3, 11, 12, 6, 1, 7, 9, 17, 22, 16, 10, 21, 15, 18.

Whilst the rankings under the patterns do not agree, it is clear that some columns rank highly under all three (e.g. columns 8 and 23) and should be consider first when assigning to columns those factors thought mostly likely to be active. Other columns rank poorly under all three patterns (e.g. columns 4 and 17) and should be reserved for those factors thought least likely to be important.

\section{Concluding remarks}\label{sec:disc}

Commonly applied criteria for the selection of SSDs generally reduce design performance down to a single number, typically summarising the aliasing structure between main effects. If interactions effects may be present, they can bias main effect estimators and hence lead to incorrect conclusions being drawn about which factors are important. Similarly, a study of the aliasing properties of an SSD at a finer graduation can allow discrimination between designs that have identical performance under criteria such as $E(s^2)$-optimality. It also allows assessment of the impact of the choice of assignment of factors to columns. 

In this paper we have proposed and demonstrated new descriptive statistics, SEAS, for the characterization of the aliasing structure of an SSD, or non-regular fractional factorial designs more generally. We have shown how the three MAP components of SEAS, (1) Maximum dependency aliasing pattern; (2) Average square aliasing pattern; and (3) Pairwise dependency ratio, generalize the summarizes provide by $E(s^2)$-optimality, generalized resolution and generalized wordlength patterns. The application of SEAS has been demonstrated on a set of exemplar designs with 23 factors and 14 runs, including two optimal or efficient designs suggested from the literature. There were differences between the ranking of designs under the components of SEAS compared to the traditional criteria, mainly due to these traditional criteria being either a special case of SEAS or aggregating information from the SEAS components.    

We also proposed Effect-SEAS for describing the correlation of an individual factorial effect with other effects in an experiment. This criterion allows the ranking of columns for the assignment of factors, using the extent of their correlation with other columns and products of columns (corresponding to partial aliasing with main effects and interactions). When demonstrated on an $E(s^2)$-optimal design, Effect-SEAS quantified the differences between columns in such a way that informed decisions on factor to column assignments could be made.

A main limitation to SEAS is the rapidly increasing length of the MAP patterns as the number of factors in the design increases, and the computational cost of obtaining them. A simple solution is to appeal to effect sparsity and hierarchy (\citealp{WH2009}, pp.~172--173) and only calculate patterns up to the fourth or fifth terms, corresponding to aliasing between four factors and five factors. 

 SEAS is a very general purpose tool,  and future work could apply the summaries to more general nonregular designs, and use Effect-SEAS to choose factor to column assignments to minimize the resulting aliasing of interaction effects. 

\section*{Acknowledgements}
Frederick Phoa was supported by a Career Development Award from Academia Sinica (Taiwan), grant number 103-CDA-M04, and the Ministry of Science and Technology (Taiwan), grant numbers 104-2118-M-001-016-MY2, 105-2118-M-001-007-MY2 and 105-2911-I-001-516-MY2. David Woods was supported by Fellowship EP/J018317/1 from the UK Engineering and Physical Sciences Research Council and by an award from the UK Royal Society International Exchanges scheme.

\appendix
\section{The SEAS of Five SSDs in Section~\ref{sec:comp}}\label{app:comp}

Each design is given in the form of a design vector. To construct the design matrix, each entry of the vector should be converted into its binary code to form a column of the matrix. For example, the first entry in $D_{SIB}$ is $1207$. For a design with $n=14$ runs, the binary code of $1207$ is $000010010110111$, so the first column of this design is $(-1,-1,-1,-1,+1,-1,-1,+1,-1,+1,+1,-1,+1,+1,+1)^{\T}$. 

\singlespacing

\begin{table}
\caption{\label{tab:d1_dseas}Summary of the aliasing structure for design $D_1$}
  \begin{center}
  \begin{tabular}{@{}ll@{}}
  \toprule[1pt]
 & $E(s^2)=7.921$; $GR=2.29$ \\
  \midrule[1pt]
  Design & $(127,~953,~1507,~1906,~2524,~2794,~3253,~5356,~7363,~7508,~8918,~9464,$\\
  vector & $~9614,~9937,~10053,~10598,~10721,~11291,~11430,~11880,~12405,~12619,$ \\
  &$~12722)$\\
  \midrule
  M-pattern & $(1.0000, 2.0714, 3.0857, 4.0714, 5.0857, 6.1000, 7.0857, 8.1000, 9.0857, 10.1000,$\\
  & $~11.0857, 12.1000, 13.0857, 14.1000, 15.0857, 16.1000, 17.0857, 18.1000, 19.0857,$\\ 
  & $~20.0714,21.0857, 22.1000, 23.0000)$\\
  \midrule
  A-pattern &$(1.0000, 2.0040, 3.0128, 4.0075, 5.0121, 6.0071, 7.0124, 8.0072, 9.0123, 10.0071,$ \\ 
  & $~11.0123, 12.0071, 13.0123, 14.0071, 15.0123, 16.0071, 17.0123, 18.0072, 19.0124,$ \\ 
  &$~20.0070, 21.0121, 22.0084, 23.0000)$\\
  \midrule
  P-pattern &$(1.0000, 2.1000, 3.0621, 4.1000, 5.0575, 6.1000, 7.0583, 8.1000, 9.0580, 10.1000,$\\ 
  & $~11.0581, 12.1000, 13.0580, 14.1000, 15.0582, 16.1000, 17.0580, 18.1000, 19.0579,$ \\
  & $~20.1000, 21.0597, 22.1000, 23.0000)$\\
  \midrule
  GWLP & $(0.000, 10.224, 141.713, 661.026, 2340.610, 7153.104, 17663.815, 35086.870,$ \\ 
  & $~58129.518, 81686.312, 96833.138, 96551.890, 81539.363, 58420.913, 35107.249,$ \\ 
  & $~17477.242, 7186.267, 2417.681, 635.413, 123.332, 18.286, 1.939, 0.000)$\\
  \bottomrule[1pt]
  \end{tabular}
  \end{center}
  \end{table}
  
  \begin{table}
\caption{\label{tab:d2_dseas}Summary of the aliasing structure for design $D_2$}
  \begin{center}
  \begin{tabular}{@{}ll@{}}
  \toprule[1pt]
 & $E(s^2)=7.921$; $GR=2.29$ \\
  \midrule[1pt]
  Design & $(926,~1769,~1877,~2414,~2545,~2663,~2771,~3644,~4002,~5005,~5242,~5461,$ \\
  vector & $~6580,~7112,~7828,~8918,~9017,~9712,~10460,~11405,~12100,~13028,~14897)$\\
  \midrule
  M-pattern & $(1.0000, 2.0714, 3.0857, 4.1000, 5.0857, 6.1000, 7.0857, 8.1000, 9.0857, 10.1000,$ \\ 
  & $~11.0857, 12.1000, 13.0857, 14.1000, 15.0857, 16.1000, 17.0857, 18.1000, 19.0857,$ \\ 
  & $~20.1000, 21.0857, 22.0429, 23.0571)$\\
  \midrule
  A-pattern & $(1.0000, 2.0040, 3.0129, 4.0075, 5.0122, 6.0071, 7.0124, 8.0072, 9.0123, 10.0071,$ \\ 
  & $~11.0123, 12.0072,13.0123, 14.0071, 15.0123, 16.0072, 17.0122, 18.0071, 19.0123,$ \\ 
  & $~20.0073, 21.0113, 22.0056, 23.0327)$\\
  \midrule
  P-pattern & $(1.0000, 2.1000, 3.0615, 4.1000, 5.0573, 6.1000, 7.0583, 8.1000, 9.0580, 10.1000,$ \\ 
  & $~11.0582, 12.1000,13.0580, 14.1000, 15.0582, 16.1000, 17.0580, 18.1000, 19.0592,$ \\ 
  & $~20.1000, 21.0549, 22.1000, 23.1000)$\\
  \midrule
  GWLP & $(0.000, 10.224, 140.730, 663.682, 2345.820, 7134.934, 17655.412, 35145.708,$ \\ 
  & $~58127.110, 81583.346, 96862.012, 96687.098, 81480.367, 58314.678, 35159.664,$\\ 
  & $~17531.178, 7157.369, 2396.818, 646.201,128.557, 15.674, 1.286, 0.327)$\\
  \bottomrule[1pt]
  \end{tabular}
  \end{center}
  \end{table}
  
  \begin{table}
\caption{\label{tab:d3_dseas}Summary of the aliasing structure for design $D_3$}
  \begin{center}
  \begin{tabular}{@{}ll@{}}
  \toprule[1pt]
 & $E(s^2)=7.921$; $GR=2.57$ \\
  \midrule[1pt]
  Design & $(1739,~1894,~1897,~2263,~3239,~3667,~5011,~5181,~7396,~7956,~9457,~9558,$\\ 
  vector & $~9650,~10117,~10553,~10574,~10791,~11928,~12916,~12997,~13327,~15681,$ \\
  & $~15906)$\\
  \midrule
  M-pattern & $(1.0000, 2.0429, 3.0857, 4.1000, 5.0857, 6.1000, 7.0857, 8.1000, 9.0857, 10.1000,$ \\ 
  & $~11.0857, 12.1000,13.0857, 14.1000, 15.0857, 16.1000, 17.0857, 18.1000, 19.0857,$ \\ 
  & $~20.0714, 21.0857, 22.0714, 23.0000)$\\
  \midrule
  A-pattern &$(1.0000, 2.0040, 3.0136, 4.0075, 5.0121, 6.0071, 7.0123, 8.0072, 9.0123, 10.0071,$ \\ 
  & $~11.0123, 12.0071,13.0123, 14.0071, 15.0123, 16.0071, 17.0123, 18.0072, 19.0124,$ \\ 
  & $~20.0068, 21.0117, 22.0098, 23.0000)$\\
  \midrule
  P-pattern &$(1.0000, 2.1000, 3.0589, 4.1000, 5.0574, 6.1000, 7.0584, 8.1000, 9.0580, 10.1000,$ \\ 
  & $~11.0581, 12.1000,13.0581, 14.1000, 15.0582, 16.1000, 17.0578, 18.1000, 19.0584,$ \\ 
  & $~20.1000, 21.0597, 22.1000, 23.0000)$\\
  \midrule
  GWLP & $(0.000, 10.224, 141.712, 661.380, 2339.972, 7151.085, 17669.563, 35091.773,$ \\ 
  & $~58112.617, 81686.312, 96865.651, 96551.890, 81492.724, 58420.913, 35146.262,$ \\ 
  & $~17467.437, 7167.927, 2424.410, 640.648,121.049, 17.632, 2.265, 0.000)$\\
  \bottomrule[1pt]
  \end{tabular}
  \end{center}
  \end{table}
  
  \begin{table}
\caption{\label{tab:dlin_dseas}Summary of the aliasing structure for design $D_{Lin}$}
  \begin{center}
  \begin{tabular}{@{}ll@{}}
  \toprule[1pt]
 & $E(s^2)=7.921$; $GR=2.57$ \\
  \midrule[1pt]
  Design & See Lin (1993) \\
  vector & \\
  \midrule
  M-Pattern & $(1.0000, 2.0429, 3.0857, 4.1000, 5.0857, 6.1000, 7.0857, 8.1000, 9.0857, 10.1000,$ \\ 
  & $~11.0857, 12.1000,13.0857, 14.1000, 15.0857, 16.1000, 17.0857, 18.1000, 19.0857,$ \\ 
  & $~20.0714, 21.0571, 22.0429, 23.0286)$\\
  \midrule
  A-Pattern &$(1.0000, 2.0040, 3.0131, 4.0075, 5.0123, 6.0071, 7.0122, 8.0072, 9.0124, 10.0071,$ \\ 
  & $~11.0122, 12.0071,13.0124, 14.0071, 15.0122, 16.0071, 17.0123, 18.0072, 19.0121,$ \\ 
  & $~20.0071, 21.0127, 22.0070, 23.0082)$\\
  \midrule
  P-Pattern &$(1.0000, 2.1000, 3.0612, 4.1000, 5.0566, 6.1000, 7.0589, 8.1000, 9.0575, 10.1000,$ \\ 
  & $~11.0587, 12.1000, 13.0574, 14.1000, 15.0587, 16.1000, 17.0577, 18.1000, 19.0592,$ \\ 
  & $~20.1000, 21.0553, 22.1000, 23.1000)$\\
  \midrule
  GWLP & $(0.000, 10.224, 141.473, 661.380, 2342.624, 7151.085, 17655.475, 35096.676,$ \\ 
  & $~58149.599, 81663.431, 96805.278, 96592.452, 81555.532, 58380.054, 35099.289,$ \\ 
  & $~17504.210, 7188.022, 2405.904, 636.180, 126.272, 17.795, 1.612, 0.0816)$\\
  \bottomrule[1pt]
  \end{tabular}
  \end{center}
  \end{table}
  
  \begin{table}
\caption{\label{tab:dsib_dseas}Summary of the aliasing structure for design $D_{SIB}$}
  \begin{center}
  \begin{tabular}{@{}ll@{}}
  \toprule[1pt]
& $E(s^2)=7.415$; $GR=2.57$ \\
  \midrule[1pt]
  Design & $(1207,~1479,~1964,~2426,~2774,~3181,~4726,~5041,~5275,~5368,~5678,~6439,$ \\
  vector & $~6556, 7876,~8682,~8847,~9588,~10428,~11825,~12381,~12517,~13590,~15522)$\\
  \midrule
  M-pattern & $(1.0000, 2.0429, 3.0857, 4.1000, 5.0857, 6.1000, 7.0857, 8.1000, 9.0857, 10.1000,$ \\
  & $~11.0857, 12.1000, 13.0857, 14.1000, 15.0857, 16.1000, 17.0857, 18.1000, 19.0857,$ \\ 
   & $~20.1000, 21.0857, 22.0714, 23.0000)$\\
  \midrule
  A-pattern &$(1.000, 2.0038, 3.0132, 4.0075, 5.0121, 6.0071, 7.0124, 8.0072, 9.0122, 10.0071,$ \\ 
  & $~11.0124, 12.0071,13.0122, 14.0071, 15.0124, 16.0071, 17.0122, 18.0072, 19.0125,$ \\ 
  & $~20.0068, 21.0125, 22.0106, 23.0000)$\\
  \midrule
  P-pattern &$(1.0000, 2.1000, 3.0610, 4.1000, 5.0574, 6.1000, 7.0582, 8.1000, 9.0582, 10.1000,$ \\ 
  & $~11.0580, 12.1000,13.0582, 14.1000, 15.0581, 16.1000, 17.0580, 18.1000, 19.0584,$ \\ 
  & $~20.1000, 21.0542, 22.1000, 23.0000)$\\
  \midrule
  GWLP & $(0.000, 9.571, 142.854, 666.427, 2330.354, 7133.924, 17705.046, 35121.192,$ \\ 
  & $~58036.921, 81651.990, 96973.148, 96578.932, 81392.731, 58420.913, 35205.141,$ \\ 
  & $~17457.630, 7145.229, 2429.458, 645.7176, 119.578, 17.143, 2.429, 0.000)$\\
  \bottomrule[1pt]
  \end{tabular}
  \end{center}
  \end{table}
\clearpage

\section{The Effect-SEAS of $D_{SIB}$ in Section~\ref{sec:col}}\label{app:eseas}

\begin{longtable}{c|l}
\caption{Effect-SEAS for $D_{SIB}$: M-pattern}\label{tab:eseasM} \\
\centering
Column & M-Pattern \\
\midrule[1pt]
1  & 2.0429, 3.0571, 4.0714, 5.0857, 6.1, 7.0857, 8.1, 9.0857, 10.1, 11.0857, 12.1, \\*
    & 13.0857, 14.1, 15.0857, 16.1, 17.0857, 18.1, 19.0857, 20.1, 21.0857, 22.0714 \\
\midrule[1pt]
2  & 2.0429, 3.0857, 4.0714, 5.0857, 6.1, 7.0857, 8.1, 9.0857, 10.1, 11.0857, 12.1, \\*
    & 13.0857, 14.1, 15.0857, 16.1, 17.0857, 18.1, 19.0857, 20.1, 21.0857, 22.0714 \\
\midrule[1pt]
3  & 2.0429, 3.0857, 4.0714, 5.0857, 6.1, 7.0857, 8.1, 9.0857, 10.1, 11.0857, 12.1, \\*
    & 13.0857, 14.1, 15.0857, 16.1, 17.0857, 18.1, 19.0857, 20.1, 21.0857, 22.0714 \\
\midrule[1pt]
4  & 2.0429, 3.0857, 4.1000, 5.0857, 6.1, 7.0857, 8.1, 9.0857, 10.1, 11.0857, 12.1, \\*
    & 13.0857, 14.1, 15.0857, 16.1, 17.0857, 18.1, 19.0857, 20.1, 21.0857, 22.0429 \\
\midrule[1pt]
5  & 2.0429, 3.0857, 4.1000, 5.0857, 6.1, 7.0857, 8.1, 9.0857, 10.1, 11.0857, 12.1, \\*
    & 13.0857, 14.1, 15.0857, 16.1, 17.0857, 18.1, 19.0857, 20.1, 21.0857, 22.0714 \\
\midrule[1pt]
6  & 2.0429, 3.0571, 4.0714, 5.0857, 6.1, 7.0857, 8.1, 9.0857, 10.1, 11.0857, 12.1, \\*
    & 13.0857, 14.1, 15.0857, 16.1, 17.0857, 18.1, 19.0857, 20.1, 21.0857, 22.0714 \\
\midrule[1pt]
7  & 2.0429, 3.0857, 4.0714, 5.0857, 6.1, 7.0857, 8.1, 9.0857, 10.1, 11.0857, 12.1, \\*
    & 13.0857, 14.1, 15.0857, 16.1, 17.0857, 18.1, 19.0857, 20.1, 21.0857, 22.0714 \\
\midrule[1pt]
8  & 2.0429, 3.0857, 4.0714, 5.0857, 6.1, 7.0857, 8.1, 9.0857, 10.1, 11.0857, 12.1, \\*
    & 13.0857, 14.1, 15.0857, 16.1, 17.0857, 18.1, 19.0857, 20.1, 21.0857, 22.0714 \\
\midrule[1pt]
9  & 2.0429, 3.0571, 4.0714, 5.0857, 6.1, 7.0857, 8.1, 9.0857, 10.1, 11.0857, 12.1, \\*
    & 13.0857, 14.1, 15.0857, 16.1, 17.0857, 18.1, 19.0857, 20.1, 21.0857, 22.0714 \\
\midrule[1pt]
10 & 2.0429, 3.0857, 4.0714, 5.0857, 6.1, 7.0857, 8.1, 9.0857, 10.1, 11.0857, 12.1, \\*
     & 13.0857, 14.1, 15.0857, 16.1, 17.0857, 18.1, 19.0857, 20.1, 21.0857, 22.0714 \\
\midrule[1pt]
11 & 2.0429, 3.0857, 4.0714, 5.0857, 6.1, 7.0857, 8.1, 9.0857, 10.1, 11.0857, 12.1, \\*
     & 13.0857, 14.1, 15.0857, 16.1, 17.0857, 18.1, 19.0857, 20.1, 21.0857, 22.0714 \\
\midrule[1pt]
12 & 2.0429, 3.0857, 4.1000, 5.0857, 6.1, 7.0857, 8.1, 9.0857, 10.1, 11.0857, 12.1, \\*
    & 13.0857, 14.1, 15.0857, 16.1, 17.0857, 18.1, 19.0857, 20.1, 21.0857, 22.0714 \\
\midrule[1pt]
13 & 2.0429, 3.0857, 4.0714, 5.0857, 6.1, 7.0857, 8.1, 9.0857, 10.1, 11.0857, 12.1, \\*
    & 13.0857, 14.1, 15.0857, 16.1, 17.0857, 18.1, 19.0857, 20.1, 21.0857, 22.0714 \\
\midrule[1pt]
14 & 2.0429, 3.0857, 4.0714, 5.0857, 6.1, 7.0857, 8.1, 9.0857, 10.1, 11.0857, 12.1, \\*
     & 13.0857, 14.1, 15.0857, 16.1, 17.0857, 18.1, 19.0857, 20.1, 21.0857, 22.0714 \\
\midrule[1pt]
15 & 2.0429, 3.0857, 4.0714, 5.0857, 6.1, 7.0857, 8.1, 9.0857, 10.1, 11.0857, 12.1, \\*
     & 13.0857, 14.1, 15.0857, 16.1, 17.0857, 18.1, 19.0857, 20.1, 21.0857, 22.0714 \\
\midrule[1pt]
16 & 2.0429, 3.0571, 4.0714, 5.0857, 6.1, 7.0857, 8.1, 9.0857, 10.1, 11.0857, 12.1,\\*
     & 13.0857, 14.1, 15.0857, 16.1, 17.0857, 18.1, 19.0857, 20.1, 21.0857, 22.0714 \\
\midrule[1pt]
17 & 2.0429, 3.0857, 4.1000, 5.0857, 6.1, 7.0857, 8.1, 9.0857, 10.1, 11.0857, 12.1, \\*
    & 13.0857, 14.1, 15.0857, 16.1, 17.0857, 18.1, 19.0857, 20.1, 21.0857, 22.0714 \\
\midrule[1pt]
18 & 2.0429, 3.0571, 4.0714, 5.0857, 6.1, 7.0857, 8.1, 9.0857, 10.1, 11.0857, 12.1, \\*
    & 13.0857, 14.1, 15.0857, 16.1, 17.0857, 18.1, 19.0857, 20.1, 21.0857, 22.0714 \\
\midrule[1pt]
19 & 2.0429, 3.0857, 4.0714, 5.0857, 6.1, 7.0857, 8.1, 9.0857, 10.1, 11.0857, 12.1, \\*
    & 13.0857, 14.1, 15.0857, 16.1, 17.0857, 18.1, 19.0857, 20.1, 21.0857, 22.0714 \\
\midrule[1pt]
20 & 2.0429, 3.0857, 4.0714, 5.0857, 6.1, 7.0857, 8.1, 9.0857, 10.1, 11.0857, 12.1, \\*
    & 13.0857, 14.1, 15.0857, 16.1, 17.0857, 18.1, 19.0857, 20.1, 21.0857, 22.0714 \\
\midrule[1pt]
21 & 2.0429, 3.0857, 4.0714, 5.0857, 6.1, 7.0857, 8.1, 9.0857, 10.1, 11.0857, 12.1, \\*
    & 13.0857, 14.1, 15.0857, 16.1, 17.0857, 18.1, 19.0857, 20.1, 21.0857, 22.0714 \\
\midrule[1pt]
22 & 2.0429, 3.0857, 4.0714, 5.0857, 6.1, 7.0857, 8.1, 9.0857, 10.1, 11.0857, 12.1, \\*
    & 13.0857, 14.1, 15.0857, 16.1, 17.0857, 18.1, 19.0857, 20.1, 21.0857, 22.0714 \\
\midrule[1pt]
23 & 2.0429, 3.0571, 4.0714, 5.0857, 6.1, 7.0857, 8.1, 9.0857, 10.1, 11.0857, 12.1, \\*
    & 13.0857, 14.1, 15.0857, 16.1, 17.0857, 18.1, 19.0857, 20.1, 21.0857, 22.0714 \\
\bottomrule[1pt]
\end{longtable}

\begin{longtable}{c|l}
\caption{Effect-SEAS for $D_{SIB}$: A-pattern}\label{tab:eseasA} \\
\centering
Column & A-Pattern \\
\midrule[1pt]
1  &2.0035,  3.0132,  4.0076,  5.0121,  6.0071,  7.0124,  8.0072, 9.0122, 10.0071, \\*
& 11.0124, 12.0071, 13.0122, 14.0071, 15.0124, 16.0071, 17.0122, 18.0072, \\* 
& 19.0124, 20.0068, 21.0122, 22.0102 \\
\midrule[1pt]
2  &2.0043,  3.0138,  4.0075,  5.0122,  6.0071,  7.0124,  8.0072,  9.0122, 10.0071, \\* 
& 11.0124, 12.0071, 13.0122, 14.0071, 15.0124, 16.0071, 17.0122, 18.0072, \\*
& 19.0124, 20.0068, 21.0125, 22.0102 \\
\midrule[1pt]
3  & 2.0035,  3.0135,  4.0076,  5.0110,  6.0071,  7.0124,  8.0072,  9.0122, 10.0071, \\* 
& 11.0124, 12.0071, 13.0122, 14.0072, 15.0124, 16.0071, 17.0122, 18.0072, \\* 
& 19.0125, 20.0067, 21.0123, 22.0109 \\
\midrule[1pt]
4  & 2.0050,  3.0132,  4.0074,  5.0123,  6.0071,  7.0123,  8.0072,  9.0122, 10.0071, \\*
& 11.0124, 12.0071, 13.0122, 14.0071, 15.0123  16.0071, 17.0122, 18.0072, \\* 
& 19.0124, 20.0070, 21.0129, 22.0087 \\
\midrule[1pt]
5  & 2.0035,  3.0139,  4.0076,  5.0110,  6.0071,  7.0124,  8.0072,  9.0122, 10.0071, \\* 
& 11.0124, 12.0071, 13.0122, 14.0072, 15.0124, 16.0071, 17.0121, 18.0072, \\* 
& 19.0126, 20.0067, 21.0123, 22.0109 \\
\midrule[1pt]
6  & 2.0035,  3.0134,  4.0076,  5.0120,  6.0071,  7.0124,  8.0072,  9.0122, 10.0071, \\* 
& 11.0124, 12.0071, 13.0122, 14.0072, 15.0124  16.0071, 17.0122, 18.0072, \\* 
& 19.0125, 20.0067, 21.0126, 22.0109 \\
\midrule[1pt]
7  & 2.0035,  3.0133,  4.0076,  5.0120,  6.0071,  7.0124,  8.0072,  9.0122, 10.0071, \\* 
& 11.0124, 12.0071, 13.0122, 14.0072, 15.0124, 16.0071, 17.0122, 18.0072, \\* 
& 19.0125, 20.0067, 21.0126, 22.0109 \\
\midrule[1pt]
8  & 2.0028,  3.0139,  4.0077,  5.0121,  6.0070,  7.0124,  8.0072,  9.0122, 10.0071, \\* 
& 11.0124, 12.0071, 13.0122, 14.0071, 15.0124, 16.0071, 17.0122, 18.0072, \\* 
& 19.0125, 20.0067, 21.0124, 22.0109 \\
\midrule[1pt]
9  & 2.0043,  3.0130,  4.0075,  5.0120,  6.0071,  7.0124,  8.0072,  9.0122, 10.0071, \\* 
& 11.0124, 12.0071, 13.0122, 14.0071, 15.0124, 16.0071, 17.0122, 18.0072, \\* 
& 19.0125, 20.0068, 21.0128, 22.0102 \\
\midrule[1pt]
10 & 2.0043,  3.0125,  4.0075,  5.0121,  6.0071,  7.0124,  8.0072,  9.0122, 10.0071, \\* 
& 11.0124, 12.0071, 13.0122, 14.0072, 15.0124, 16.0071, 17.0122, 18.0072, \\* 
& 19.0124, 20.0067, 21.0125, 22.0109 \\
\midrule[1pt]
11 & 2.0035,  3.0133,  4.0076,  5.0121,  6.0071,  7.0124,  8.0072,  9.0122, 10.0071, \\* 
& 11.0124, 12.0071, 13.0122, 14.0071, 15.0127, 16.0071, 17.0122, 18.0072, \\* 
& 19.0125, 20.0068, 21.0128, 22.0102 \\
\midrule[1pt]
12 & 2.0028,  3.0135,  4.0076,  5.0120,  6.0070,  7.0124,  8.0072,  9.0122, 10.0071, \\*
& 11.0124, 12.0071, 13.0122, 14.0071, 15.0124, 16.0071, 17.0122, 18.0072, \\* 
& 19.0126, 20.0067, 21.0122, 22.0109 \\
\midrule[1pt]
13 & 2.0035,  3.0139,  4.0076,  5.0122,  6.0071,  7.0125,  8.0072,  9.0122, 10.0071, \\* 
& 11.0124, 12.0071, 13.0122, 14.0072, 15.0124, 16.0071, 17.0122, 18.0072, \\* 
& 19.0126, 20.0067, 21.0124, 22.0109 \\
\midrule[1pt]
14 & 2.0043,  3.0138,  4.0075,  5.0120,  6.0071,  7.0124,  8.0072,  9.0122, 10.0071, \\* 
& 11.0124, 12.0071, 13.0122, 14.0071, 15.0123, 16.0071, 17.0122, 18.0072, \\* 
& 19.0125, 20.0068, 21.0128, 22.0102 \\
\midrule[1pt]
15 & 2.0043,  3.0121,  4.0075,  5.0120,  6.0071,  7.0124,  8.0072,  9.0122, 10.0071, \\* 
& 11.0124, 12.0071, 13.0122, 14.0071, 15.0123, 16.0071, 17.0122, 18.0072, \\*  
& 19.0125, 20.0068, 21.0128, 22.0102 \\
\midrule[1pt]
16 & 2.0043,  3.0125,  4.0075,  5.0120,  6.0071,  7.0124,  8.0072,  9.0122, 10.0071, \\* 
& 11.0124, 12.0071, 13.0122, 14.0071, 15.0123, 16.0071, 17.0122, 18.0072, \\* 
& 19.0124, 20.0068, 21.0128, 22.0102 \\
\midrule[1pt]
17 & 2.0043,  3.0127,  4.0075,  5.0121,  6.0071,  7.0124,  8.0072,  9.0122, 10.0071, \\* 
& 11.0124, 12.0071, 13.0122, 14.0071, 15.0124, 16.0071, 17.0122, 18.0072, \\* 
& 19.0125, 20.0068, 21.0124, 22.0102 \\
\midrule[1pt]
18 & 2.0035,  3.0117,  4.0076,  5.0120,  6.0071,  7.0124,  8.0072,  9.0122, 10.0071, \\* 
& 11.0124, 12.0071, 13.0122, 14.0071, 15.0124, 16.0071, 17.0122, 18.0072, \\* 
& 19.0125, 20.0068, 21.0125, 22.0102 \\
\midrule[1pt]
19 & 2.0043,  3.0142,  4.0075,  5.0121,  6.0071,  7.0124,  8.0072,  9.0122, 10.0071, \\* 
& 11.0124, 12.0071, 13.0122, 14.0072, 15.0124, 16.0071, 17.0122, 18.0072, \\* 
& 19.0125, 20.0067, 21.0125, 22.0109 \\
\midrule[1pt]
20 & 2.0043,  3.0141,  4.0075,  5.0121,  6.0071,  7.0124,  8.0072,  9.0122, 10.0071, \\* 
& 11.0124, 12.0071, 13.0122, 14.0072, 15.0124, 16.0071, 17.0122, 18.0072, \\* 
& 19.0125, 20.0067, 21.0123, 22.0109 \\
\midrule[1pt]
21 & 2.0035,  3.0125,  4.0076,  5.0120,  6.0071,  7.0124,  8.0072,  9.0122, 10.0071, \\* 
& 11.0124, 12.0071, 13.0122, 14.0072, 15.0124, 16.0071, 17.0122, 18.0072, \\* 
& 19.0125, 20.0067, 21.0124, 22.0109 \\
\midrule[1pt]
22 & 2.0035,  3.0129,  4.0076,  5.0120,  6.0071,  7.0124,  8.0072,  9.0122, 10.0071, \\* 
& 11.0124, 12.0071, 13.0122, 14.0072, 15.0124, 16.0071, 17.0122, 18.0072, \\* 
& 19.0126, 20.0067, 21.0123, 22.0109 \\
\midrule[1pt]
23 & 2.0028,  3.0136,  4.0076,  5.0120,  6.0070,  7.0124,  8.0072,  9.0122, 10.0071, \\* 
& 11.0124, 12.0071, 13.0122, 14.0071, 15.0124, 16.0071, 17.0122, 18.0072, \\* 
& 19.0125, 20.0067, 21.0124, 22.0109 \\
\bottomrule[1pt]
\end{longtable}

\begin{longtable}{c|l}
\caption{Effect-SEAS for $D_{SIB}$: P-pattern}\label{tab:eseasP} \\
\centering
Column & P-Pattern \\
1  & 2.1, 3.0610, 4.1, 5.0572, 6.1, 7.0583, 8.1,   9.0582, 10.1, 11.0580, 12.1,    \\*
   & 13.0582, 14.1, 15.0581, 16.1, 17.0580, 18.1, 19.0586, 20.1, 21.0554, 22.1 \\
\midrule[1pt]
2  & 2.1, 3.0580, 4.1, 5.057 , 6.1, 7.0583, 8.1,   9.0582, 10.1, 11.0580, 12.1,    \\*
   & 13.0582, 14.1, 15.0581, 16.1, 17.0580, 18.1, 19.0585, 20.1, 21.0550, 22.1 \\
\midrule[1pt]
3  & 2.1, 3.0602, 4.1, 5.0578, 6.1, 7.0582, 8.1,   9.0582, 10.1, 11.0580, 12.1,    \\*
   & 13.0582, 14.1, 15.0582, 16.1, 17.0579, 18.1, 19.0585, 20.1, 21.0545, 22.1 \\
\midrule[1pt]
4  & 2.1, 3.0589, 4.1, 5.0571, 6.1, 7.0583, 8.1,   9.0582, 10.1, 11.0580, 12.1,    \\*
   & 13.0582, 14.1, 15.0580, 16.1, 17.0581, 18.1, 19.0580, 20.1, 21.0550, 22.1 \\
\midrule[1pt]
5  & 2.1, 3.0584, 4.1, 5.058 , 6.1, 7.0582, 8.1,   9.0582, 10.1, 11.0580, 12.1,    \\*
   & 13.0582, 14.1, 15.0581, 16.1, 17.0581, 18.1, 19.0582, 20.1, 21.0545, 22.1 \\
\midrule[1pt]
6  & 2.1, 3.0606, 4.1, 5.0575, 6.1, 7.0583, 8.1,   9.0581, 10.1, 11.0580, 12.1,    \\*
   & 13.0582, 14.1, 15.0581, 16.1, 17.0580, 18.1, 19.0585, 20.1, 21.0532, 22.1 \\
\midrule[1pt]
 7  & 2.1, 3.0610, 4.1, 5.0575, 6.1, 7.0583, 8.1,   9.0581, 10.1, 11.0580, 12.1,    \\*
    & 13.0582, 14.1, 15.0581, 16.1, 17.0580, 18.1, 19.0585, 20.1, 21.0532, 22.1 \\
\midrule[1pt]
 8  & 2.1, 3.0589, 4.1, 5.0570, 6.1, 7.0584, 8.1,   9.0581, 10.1, 11.0580, 12.1,    \\*
    & 13.0582, 14.1, 15.0581, 16.1, 17.0580, 18.1, 19.0585, 20.1, 21.0537, 22.1 \\
\midrule[1pt]
 9  & 2.1, 3.0614, 4.1, 5.0577, 6.1, 7.0583, 8.1,   9.0581, 10.1, 11.0580, 12.1,    \\*
    & 13.0582, 14.1, 15.0581, 16.1, 17.0580, 18.1, 19.0582, 20.1, 21.0537, 22.1 \\
\midrule[1pt]
 10 & 2.1, 3.0645, 4.1, 5.0573, 6.1, 7.0582, 8.1,   9.0581, 10.1, 11.0580, 12.1,    \\*
    & 13.0582, 14.1, 15.0581, 16.1, 17.0580, 18.1, 19.0588, 20.1, 21.0541, 22.1 \\
\midrule[1pt]
 11 & 2.1, 3.0606, 4.1, 5.0574, 6.1, 7.0583, 8.1,   9.0582, 10.1, 11.0580, 12.1,    \\*
    & 13.0582, 14.1, 15.0581, 16.1, 17.0580, 18.1, 19.0582, 20.1, 21.0528, 22.1 \\
\midrule[1pt]
 12 & 2.1, 3.0606, 4.1, 5.0574, 6.1, 7.0583, 8.1,   9.0582, 10.1, 11.0580, 12.1,    \\*
    & 13.0582, 14.1, 15.0582, 16.1, 17.0579, 18.1, 19.0583, 20.1, 21.0545, 22.1 \\
\midrule[1pt]
 13 & 2.1, 3.0584, 4.1, 5.0567, 6.1, 7.0580, 8.1,   9.0583, 10.1, 11.0580, 12.1,    \\*
    & 13.0581, 14.1, 15.0581, 16.1, 17.0579, 18.1, 19.0582, 20.1, 21.0541, 22.1 \\
\midrule[1pt]
 14 & 2.1, 3.0580, 4.1, 5.0581, 6.1, 7.0583, 8.1,   9.0581, 10.1, 11.0580, 12.1,    \\*
    & 13.0582, 14.1, 15.0581, 16.1, 17.0579, 18.1, 19.0582, 20.1, 21.0537, 22.1 \\
\midrule[1pt]
 15 & 2.1, 3.0658, 4.1, 5.0578, 6.1, 7.0582, 8.1,   9.0582, 10.1, 11.0580, 12.1,    \\*
    & 13.0582, 14.1, 15.0581, 16.1, 17.0578, 18.1, 19.0583, 20.1, 21.0537, 22.1 \\
\midrule[1pt]
 16 & 2.1, 3.0641, 4.1, 5.0580, 6.1, 7.0580, 8.1,   9.0582, 10.1, 11.0580, 12.1,    \\*
    & 13.0582, 14.1, 15.0581, 16.1, 17.0579, 18.1, 19.0583, 20.1, 21.0537, 22.1 \\
\midrule[1pt]
 17 & 2.1, 3.0628, 4.1, 5.0573, 6.1, 7.0582, 8.1,   9.0582, 10.1, 11.0580, 12.1,    \\*
    & 13.0582, 14.1, 15.0581, 16.1, 17.0581, 18.1, 19.0579, 20.1, 21.0550, 22.1 \\
\midrule[1pt]
 18 & 2.1, 3.0688, 4.1, 5.0579, 6.1, 7.0582, 8.1,   9.0582, 10.1, 11.0550, 12.1,    \\*
    & 13.0582, 14.1, 15.0581, 16.1, 17.0579, 18.1, 19.0583, 20.1, 21.0541, 22.1 \\
\midrule[1pt]
 19 & 2.1, 3.0567 , 4.1, 5.0571, 6.1, 7.0589, 8.1,  9.0582, 10.1, 11.0580, 12.1,    \\*
    & 13.0582, 14.1, 15.0581, 16.1, 17.0580, 18.1, 19.0582, 20.1, 21.0541, 22.1 \\
\midrule[1pt]
 20 & 2.1, 3.0571, 4.1, 5.0573, 6.1, 7.0583, 8.1,   9.0581, 10.1, 11.0580, 12.1,    \\*
    & 13.0582, 14.1, 15.0581, 16.1, 17.0580, 18.1, 19.0582, 20.1, 21.0550, 22.1 \\
\midrule[1pt]
 21 & 2.1, 3.0649, 4.1, 5.0575, 6.1, 7.0583, 8.1,   9.0581, 10.1, 11.0580, 12.1,    \\*
    & 13.0582, 14.1, 15.0581, 16.1, 17.0579, 18.1, 19.0584, 20.1, 21.0541, 22.1 \\
\midrule[1pt]
 22 & 2.1, 3.0628, 4.1, 5.0577, 6.1, 7.0583, 8.1,   9.0581, 10.1, 11.0580, 12.1,    \\*
    & 13.0582, 14.1, 15.0582, 16.1, 17.0579, 18.1, 19.0582, 20.1, 21.0545, 22.1 \\
\midrule[1pt]
 23 & 2.1, 3.0602, 4.1, 5.0572, 6.1, 7.0582, 8.1,   9.0581, 10.1, 11.0581, 12.1,    \\*
    & 13.0582, 14.1, 15.0581, 16.1, 17.0579, 18.1, 19.0587, 20.1, 21.0536, 22.1 \\
\bottomrule[1pt]
\end{longtable}

\bibliographystyle{asa}
\bibliography{seas}

\begin{thebibliography}{24}
\newcommand{\enquote}[1]{``#1''}
\expandafter\ifx\csname natexlab\endcsname\relax\def\natexlab#1{#1}\fi

\bibitem[{Beattie et~al.(2002)Beattie, Fong, and Lin}]{BFL2002}
Beattie, S.~D., Fong, D. K.~H., and Lin, D. K.~J. (2002), \enquote{A two-stage
  {B}ayesian model selection strategy for supersaturated designs,}
  \textit{Technometrics}, 44, 55--63.

\bibitem[{Booth and Cox(1962)}]{BC1962}
Booth, K. H.~V. and Cox, D.~R. (1962), \enquote{Some systematic supersaturated
  designs,} \textit{Technometrics}, 4, 489--495.

\bibitem[{Box and Meyer(1986)}]{BM1986}
Box, G. E.~P. and Meyer, R.~D. (1986), \enquote{An analysis of unreplicated
  fractional factorials,} \textit{Technometrics}, 28, 11--18.

\bibitem[{Cheng et~al.(2004)Cheng, Li, and Ye}]{CLY2004}
Cheng, S.~W., Li, W., and Ye, K.~Q. (2004), \enquote{Blocked nonregular
  two-level factorial designs,} \textit{Technometrics}, 46, 269--279.

\bibitem[{Chipman et~al.(1997)Chipman, Hamada, and Wu}]{CHW1997}
Chipman, H., Hamada, M., and Wu, C. F.~J. (1997), \enquote{A {B}ayesian
  variable-selection approach for analyzing designed experiments with complex
  aliasing,} \textit{Technometrics}, 39, 372--381.

\bibitem[{Dejaegher and Vander~Heyden(2008)}]{DV2008}
Dejaegher, B. and Vander~Heyden, Y. (2008), \enquote{Supersaturated designs:
  set-ups, data interpretation, and analytical applications,}
  \textit{Analytical and Bioanalytical Chemistry}, 390, 1227--1240.

\bibitem[{Deng and Lin(1994)}]{DL1994}
Deng, L.~Y. and Lin, D. K.~J. (1994), \enquote{Criteria for supersaturated
  design,} in \textit{Proceedings of the Section on Physical and Engineering
  Science}, American Statistical Association, pp. 124--128.

\bibitem[{Deng and Tang(1999)}]{DT1999}
Deng, L.~Y. and Tang, B. (1999), \enquote{Generalized resolution and minimum
  aberration criteria for Plackett-Burman and other nonregular factorial
  designs,} \textit{Statistica Sinica}, 9, 1071--1082.

\bibitem[{Dragulji{\'c} et~al.(2014)Dragulji{\'c}, Woods, Dean, Lewis, and
  Vine}]{DWDLV2014}
Dragulji{\'c}, D., Woods, D.~C., Dean, A.~M., Lewis, S.~M., and Vine, A.~E.
  (2014), \enquote{Screening strategies in the presence of interactions (with
  discussion),} \textit{Technometrics}, 56, 1--28.

\bibitem[{Jones et~al.(2008)Jones, Lin, and Nachtsheim}]{JLN2008}
Jones, B.~A., Lin, D. K.~J., and Nachtsheim, C.~J. (2008), \enquote{Bayesian
  {D}-optimal supersaturated designs,} \textit{Journal of Statistical Planning
  and Inference}, 138, 86--92.

\bibitem[{Jones and Majumdar(2014)}]{JM2014}
Jones, B.~A. and Majumdar, D. (2014), \enquote{Optimal supersaturated designs,}
  \textit{Journal of the American Statistical Association}, 109, 1592--1600.

\bibitem[{Li and Lin(2003)}]{LL2003}
Li, R. and Lin, D. K.~J. (2003), \enquote{Analysis methods for supersaturated
  design: some comparisons,} \textit{Journal of Data Science}, 1, 249--260.

\bibitem[{Lin(1993)}]{L1993}
Lin, D. K.~J. (1993), \enquote{A new class of supersaturated designs,}
  \textit{Technometrics}, 35, 28--31.

\bibitem[{Marley and Woods(2010)}]{MW2010}
Marley, C.~J. and Woods, D.~C. (2010), \enquote{A comparison of design and
  model selection methods for supersaturated experiments,}
  \textit{Computational Statistics and Data Analysis}, 54, 3158--3167.

\bibitem[{Phoa(2014)}]{P2014}
Phoa, F. K.~H. (2014), \enquote{The stepwise response refinement screener
  (SRRS),} \textit{Statistica Sinica}, 24, 197--210.

\bibitem[{Phoa et~al.(2016)Phoa, Chen, Wang, and Wong}]{PCWW2016}
Phoa, F. K.~H., Chen, R.~B., Wang, W.~C., and Wong, W.~K. (2016),
  \enquote{Optimizing two-level supersaturated designs using swarm intelligence
  techniques,} \textit{Technometrics}, 58.

\bibitem[{Phoa et~al.(2009{\natexlab{a}})Phoa, Pan, and Xu}]{PPX2009}
Phoa, F. K.~H., Pan, Y.~H., and Xu, H. (2009{\natexlab{a}}), \enquote{Analysis
  of supersaturated designs via {D}antzig selector,} \textit{Journal of
  Statistical Planning and Inference}, 139, 2362--2372.

\bibitem[{Phoa et~al.(2009{\natexlab{b}})Phoa, Wong, and Xu}]{PWX2009}
Phoa, F. K.~H., Wong, W.~K., and Xu, H. (2009{\natexlab{b}}), \enquote{The need
  of considering the interactions in the analysis of screening designs,}
  \textit{Journal of Chemometrics}, 23, 545--553.

\bibitem[{Phoa and Xu(2009)}]{PX2009}
Phoa, F. K.~H. and Xu, H. (2009), \enquote{Quarter-fraction factorial designs
  constructed via quaternary codes,} \textit{Annals of Statistics}, 37,
  2561--2581.

\bibitem[{Satterthwaite(1959)}]{S1959}
Satterthwaite, F.~E. (1959), \enquote{Random balance experimentation,}
  \textit{Technometrics}, 1, 111--137.

\bibitem[{Tang and Deng(1999)}]{TD1999}
Tang, B. and Deng, L.~Y. (1999), \enquote{Minimum $G_2$-aberration for
  nonregular fractional factorial designs,} \textit{Annals of Statistics}, 27,
  1914--1926.

\bibitem[{Wu(1993)}]{W1993}
Wu, C. F.~J. (1993), \enquote{Construction of supersaturated designs through
  partially aliased interactions,} \textit{Biometrika}, 80, 661--669.

\bibitem[{Wu and Hamada(2009)}]{WH2009}
Wu, C. F.~J. and Hamada, M. (2009), \textit{Experiments: Planning, Analysis,
  and Parameter Design Optimization}, New York: Wiley, 2nd ed.

\bibitem[{Xu(2015)}]{X2015}
Xu, H. (2015), \enquote{Nonregular and supersaturated designs,} in
  \textit{Handbook of Design and Analysis of Experiments}, eds. Dean, A.,
  Morris, M., Stufken, J., and Bingham, D., Boca Raton, Fl.: CRC Press, pp.
  339--370.

\end{thebibliography}

\end{document}